\newcommand{\bigO}{\mathcal{O}}
\newcommand{\runs}{r}
\newcommand{\len}{\lambda}
\newcommand{\pos}{\pi}
\newcommand{\NULL}{\bot}
\begin{document}

\title{From LZ77 to the Run-Length Encoded Burrows-Wheeler Transform, and Back}

\author{Alberto Policriti\textsuperscript{1,2} \and Nicola Prezza\textsuperscript{3}\thanks{Part of this work was done while the author was a PhD student at the University of Udine, Italy. Work supported by the Danish Research Council (DFF-4005-00267)}}

\institute{University of Udine, Department of Informatics, Mathematics, and Physics, Italy \and Institute of Applied Genomics, Udine, Italy \and Technical University of Denmark, DTU Compute}

\maketitle

\begin{abstract}
	
	The Lempel-Ziv factorization (LZ77) and the Run-Length encoded Burrows-Wheeler Transform (RLBWT) are two important tools in text compression and indexing, being their sizes $z$ and $\runs$ closely related to the amount of text self-repetitiveness. In this paper we consider the problem of converting the two representations into each other within a working space proportional to the input and the output. Let $n$ be the text length. We 
	show that $RLBWT$ can be converted to $LZ77$ in $\bigO(n\log r)$ time and $\bigO(r)$ words of working space. Conversely, we provide an algorithm to convert $LZ77$ to $RLBWT$ in $\bigO\big(n(\log \runs + \log z)\big)$ time and $\bigO(r+z)$ words of working space. Note that $\runs$ and $z$ can be \emph{constant} if the text is highly repetitive, and our algorithms can operate with (up to) \emph{exponentially} less space than naive solutions based on full decompression.
\end{abstract}

\section{Introduction}\label{section:intro}

The field of \emph{compressed computation}---i.e. computation on compressed representations of the data without first fully decompressing it---is lately receiving much attention due to the ever-growing rate at which data is accumulating in archives such as the web or genomic databases. Being able to operate directly on the compressed data can make an enormous difference, considering that repetitive collections, such as sets of same-species genomes or software repositories, can be compressed at rates that often exceed 1000x. In such cases, this set of techniques makes it  possible to perform most of the computation directly in primary memory and enables the possibility of manipulating huge datasets even on resource-limited machines. 

Central in the field of compressed computation are \emph{compressed data structures} such as compressed full-text indexes, geometry (e.g. 2D range search), trees, graphs. The compression of these structures (in particular those designed for unstructured data) is based on an array of techniques which include entropy compression, Lempel-Ziv parsings~\cite{ziv1977universal,ziv1978compression} (LZ77/LZ78), grammar compression~\cite{charikar2005smallest}, and the Burrows-Wheeler transform~\cite{burrows1994block} (BWT). Grammar compression, Run-Length encoding of the BWT~\cite{siren2009run,siren2012compressed} (RLBWT), and LZ77 have been shown superior in the task of compressing highly-repetitive data and, as a consequence, much research is lately focusing on these three techniques. 

In this paper we address a central point in compressed computation: can we convert between different compressed representations of a text while using an amount of working space proportional to the input/output? Being able to perform such task would, for instance, open the possibility of converting between compressed data structures (e.g. self-indexes) based on different compressors, all within compressed working space. 

It is not the fist time that this problem has been addressed. In~\cite{rytter2003application} the author shows how to convert the LZ77 encoding of a text into a grammar-based encoding, while in~\cite{bannai2012efficient,bannai2013converting} the opposite direction (though pointing to LZ78 instead of LZ77) is considered. In~\cite{tamakoshi2013run} the authors consider the conversions between LZ78 and run-length encoding of the text. Note that LZ77 and run-length encoding of the BWT are much more powerful than LZ78 and run-length encoding of the text, respectively, so methods addressing conversion between LZ77 and RLBWT would be of much higher interest. 
In this work we show how to efficiently solve this problem in space proportional to the sizes of these two compressed representations. See the Definitions section for a formal definition of $RLBWT(T)$ and $LZ77(T)$ as a list of $r$ pairs and $z$ triples, respectively. Let $RLBWT(T)\rightarrow LZ77(T)$ denote the computation of the list $LZ77(T)$ using as input the list $RLBWT(T)$ (analogously for the opposite direction). The following results are illustrated below:

\begin{enumerate}
	\item[(1)] We can compute $RLBWT(T)\rightarrow LZ77(T)$ in $\bigO(n\log \runs)$ time and $\bigO(\runs)$ words of working space
	\item[(2)] We can compute $LZ77(T) \rightarrow RLBWT(T)$ in $\bigO\big(n(\log \runs + \log z)\big)$ time and $\bigO(\runs+z)$ words of working space
\end{enumerate}

Result (1) is based on our own recent work~\cite{policriti2016computing} and requires space proportional to the input \emph{only} as output is streamed to disk. Result (2) requires space proportional to the input \emph{plus} the output, since data structures based on both compressors are used in main memory. In order to achieve result (2), we show how we can (locally) decompress $LZ77(T)$ while incrementally building a run-length BWT data structure of the reversed text. Extracting text from LZ77 is a computationally expensive task, as it requires a time proportional to the parse height $h$ per extracted character~\cite{kreft2013compressing} (with $h$ as large as $\sqrt n$, in the worst case). The key ingredient of our solution is to use the run-length BWT data structure itself to efficiently extract text from $LZ77(T)$. 

\section{Basics}\label{basics}

Since we work with both LZ77~\cite{ziv1977universal} and the Burrows-Wheeler transform~\cite{burrows1994block} (see below for definitions), we assume that our text $T$ contains both \emph{LZ} and \emph{BWT terminator} characters. More specifically, let $T$ be of the form $T=\#T'\$ \in \Sigma^n$, with $T'\in(\Sigma \setminus \{\$,\#\})^{n-2}$,  where $ \$ $ is the LZ77-terminator, and $\#$---lexicographically smaller than all elements in $\Sigma$---is the BWT-terminator. 
Note that adding the two terminator characters to our text increases only by two the number of LZ77 factors and by at most four the number of BWT runs.

The \emph{Burrows-Wheeler Transform}~\cite{burrows1994block} $BWT(T)$ is a permutation of $T$ defined as follows. Sort all cyclic permutations of $T$ in a \emph{conceptual} matrix $M\in\Sigma^{n\times n}$. $BWT(T)$ is the last column of $M$. With $F$ and $L$ we will denote the first and last column of $M$, respectively, and we will say \emph{F-positions} and \emph{L-positions} to refer to positions on these two columns. On compressible texts, $BWT(T)$ exhibits some remarkable properties that permit to boost compression. In particular, it can be shown~\cite{siren2012compressed} that repetitions in $T$ generate equal-letter runs in $BWT(T)$. We can efficiently represent this transform as the list of pairs
$$
RLBWT(T) = \langle  \len_i, c_i \rangle_{i=1,\dots, \runs_T}
$$
where $\len_i>0$ is the length of the \emph{maximal} $i$-th $c_i$-run, $c_i\in\Sigma$. Equivalently, $RLBWT(T)$ is the \emph{shortest} list of pairs $\langle  \len_i, c_i \rangle_{i=1,\dots, \runs_T}$ satisfying $BWT(T) = c_1^{\len_1}c_2^{\len_2}\dots c_{r_T}^{\len_{r_T}}$. Let $\overleftarrow T$ be the reverse of $T$. To simplify notation we define $\runs=\max\{\runs_T, \runs_{\overleftarrow T}\}$ (in practical cases $\runs_T \approx \runs_{\overleftarrow T}$ holds~\cite{belazzougui2015composite}, and this definition simplifies notation).

With $RLBWT^+(T)$ we denote a run-length encoded BWT \emph{data structure} on the text $T$, taking $\bigO(r)$ words of space and supporting \texttt{insert}, \texttt{rank}, \texttt{select}, and \texttt{access} operation on the BWT. Using these operations, functions LF and FL (mapping L-positions to F-positions and \textit{vice versa}) and function \texttt{extend} (turning $RLBWT^+(T)$ into $RLBWT^+(aT)$ for some $a\in\Sigma$) can be supported in $\bigO(\log r)$ time. We leave to the next sections details concerning the particular implementation of this data structure.

We recall that $BWT(\overleftarrow T)$ can be built online with an algorithm that reads $T$-characters left-to-right and inserts them in a dynamic string data structure~\cite{hon2007space,chan2007compressed}.  Briefly, letting $a\in\Sigma$, the algorithm is based on the idea of backward-searching the extended reversed text $\overleftarrow{T\!a}$ in the BWT index for $\overleftarrow T$. This operation leads to the F-position $l$ where $\overleftarrow{T\!a}$ should appear among all sorted $\overleftarrow T$'s suffixes. At this point, it is sufficient to insert $\#$ at position $l$ in $BWT(\overleftarrow T)$ and replace the old $\#$ with $a$ to obtain $BWT(\overleftarrow{T\!a})$.

The \emph{LZ77 parsing}~\cite{ziv1977universal} (or \emph{factorization}) of a text $T$ is the sequence of \emph{phrases} (or \emph{factors}) 
$$LZ77(T) = \langle \pos_i,\len_i,c_i \rangle_{i=1,\dots,z}$$ 
where $ \pos_i\in \{0, \ldots, n-1\}\cup \{\bot\}$ and $ \bot $ stands for ``undefined'', $ \len_i \in \{0, \ldots, n-2\}$, $c_i\in\Sigma$, and:
\begin{enumerate}
	\item $T = \omega_1c_1\ldots \omega_zc_z$, with $\omega_i=\epsilon$ if $\len_i=0$ and $\omega_i=T[\pos_i,\ldots ,\pos_i+\len_i-1]$ otherwise. 
	\item For any $i=1,\ldots ,z$, the string $\omega_i$ is the \emph{longest} occurring at least twice in $\omega_1c_1\ldots \omega_i$.
\end{enumerate}

\section{From RLBWT to LZ77}

Our algorithm to compute $RLBWT(T) \rightarrow LZ77(T)$ is based on the result~\cite{policriti2016computing}: an algorithm to compute---in $\bigO(\runs)$ words of working space and $\bigO(n\log \runs)$ time---$LZ77(T)$ using $T$ as input. The data structure at the core of this result is a dynamic run-length compressed string:

\begin{theorem}\label{th:dynamic RL}~\cite{makinen2010storage,policriti2016computing}
	Let $S\in \Sigma^n$ and let $\bar\runs$ be the number of equal-letter runs in $S$. There exists a data structure taking $\bigO(\bar\runs)$ words of space and supporting \texttt{rank}, \texttt{select}, \texttt{access}, and \texttt{insert} operations on $S$ in $\bigO(\log\bar\runs)$ time.
\end{theorem}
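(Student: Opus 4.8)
The plan is to store $S$ not character by character but run by run, and to reduce every query to a navigation over the $\bar\runs$ runs so that the logarithm is in $\bar\runs$ rather than in $n$. Concretely, I would keep the runs $(c_1,\len_1),\dots,(c_{\bar\runs},\len_{\bar\runs})$ as the leaves (in text order) of a balanced search tree --- a red-black tree or a weight-balanced B-tree --- and augment each internal node with the aggregate statistics of its subtree. The height stays $\bigO(\log\bar\runs)$, there are $\bigO(\bar\runs)$ nodes, so the space is $\bigO(\bar\runs)$ words, and every operation touches a single root-to-leaf path.

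First I would handle \texttt{access} and the position/run correspondence. Augmenting each subtree with the total length it spans lets a single top-down search, guided by the subtree length-sums, locate the run containing a given text position $i$ together with the offset inside that run, and symmetrically map a run index back to its starting text position; both take $\bigO(\log\bar\runs)$ time. Then \texttt{access}$(i)$ merely returns the head of that run. An \texttt{insert} of a character at position $i$ reduces to locating the affected run and then either incrementing its length (when the new character equals the run's character), or splitting the run into up to three pieces and possibly merging with a neighbour (otherwise); in all cases $\bar\runs$ changes by $\bigO(1)$, the rebalancing is the standard $\bigO(\log\bar\runs)$ procedure, and the augmented sums are refreshed along the path.

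The delicate part is \texttt{rank} and \texttt{select} restricted to a single character over an arbitrary alphabet, because these require weighted, \emph{per-character} cumulative sums at run granularity. For $\mathrm{rank}_c(i)$ I would (i) find the run $k$ of position $i$ and its offset, (ii) count, among the first $k-1$ runs, how many carry the character $c$, and (iii) add up the lengths of exactly those $c$-runs, finally correcting by the partial overlap inside run $k$ when $c_k=c$; \texttt{select} is the symmetric inverse. Steps (ii)--(iii) are the obstacle: they amount to a weighted rank/select on the head string $H=c_1\cdots c_{\bar\runs}$, where each head is weighted by its run length. My plan is to maintain a dynamic string structure over $H$ in which every subtree additionally stores the sum of the lengths of the runs it contains, so that counting $c$-runs and summing their lengths are answered on the same traversal. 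Because only the characters actually present matter and there are at most $\bar\runs$ of them, the effective alphabet has size $\bigO(\bar\runs)$; the main thing to argue carefully is that this weighted dynamic string can be implemented so that the alphabet factor collapses and each operation stays within $\bigO(\log\bar\runs)$ time and $\bigO(\bar\runs)$ words, rather than incurring an extra $\log|\Sigma|$ or $\log\bar\runs$ factor. Keeping these weights synchronized with the run splits and merges triggered by \texttt{insert} is the remaining bookkeeping, but it is confined to the same root-to-leaf path and therefore does not affect the asymptotic bounds.
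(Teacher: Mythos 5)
A preliminary remark: the paper itself does not prove this statement---it imports it verbatim from \cite{makinen2010storage,policriti2016computing}---so your proposal has to be judged against the construction in those works. The first half of your plan is sound: a balanced tree whose leaves are the runs, augmented with subtree length sums, supports \texttt{access}, \texttt{insert}, and the position-to-run mapping in $\bigO(\log\bar\runs)$ time and $\bigO(\bar\runs)$ words, with runs splitting and merging under insertion exactly as you describe. The genuine gap is precisely the step you defer: answering, for an \emph{arbitrary} character $c$, ``how many $c$-runs lie among the first $k$ runs, and what is the total length of those $c$-runs'' in $\bigO(\log\bar\runs)$ time. A tree storing one aggregate per subtree (the sum of the lengths it spans) cannot answer this, since the aggregate carries no per-character information: either every node stores a breakdown by character, blowing the space up to $\bigO(\bar\runs\,|\Sigma|)$ words, or the query must descend an alphabet hierarchy as in a wavelet tree, costing $\bigO(\log\bar\runs\cdot\log|\Sigma|)$ time, i.e.\ up to $\bigO(\log^2\bar\runs)$ because $|\Sigma|$ can be as large as $\bar\runs$. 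Saying that one must ``argue carefully that the alphabet factor collapses'' names the difficulty but does not resolve it; that collapse \emph{is} the content of the theorem, and your sketch provides no mechanism for it.

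The cited works sidestep weighted rank altogether by a different decomposition, and this is the fix you need. Keep (i) a dynamic string over the run heads $H=c_1c_2\cdots c_{\bar\runs}$ supporting \emph{unweighted} \texttt{rank}/\texttt{select}/\texttt{access}/\texttt{insert}, using an alphabet-independent dynamic sequence representation whose operation time over a length-$\bar\runs$ sequence is $\bigO(\log\bar\runs)$ or better; and (ii) for each character $c$, a \emph{separate} partial-sum structure (e.g.\ a gap-encoded bitvector on a red-black tree, as used elsewhere in this paper) storing only the lengths of the $c$-runs, in left-to-right order. Then $\mathrm{rank}_c(S,i)$ is computed by locating the run $k$ containing $i$ with your global tree, setting $t=H.\mathrm{rank}_c(k)$, and returning the partial sum of the first $t$ entries of the $c$-structure corrected by the overlap inside run $k$ when $c_k=c$; $\mathrm{select}_c(S,j)$ inverts this by a predecessor search on the $c$-structure's partial sums followed by $H.\mathrm{select}_c$ and a map back to text positions via the global tree. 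Space remains $\bigO(\bar\runs)$ because every run contributes exactly one entry to exactly one per-character structure, and an \texttt{insert} touches $\bigO(1)$ entries in $\bigO(1)$ of these structures, each in $\bigO(\log\bar\runs)$ time. The point is that the weighted query factors into an unweighted rank on $H$ plus a partial-sum query on a per-character structure; no single ``weighted dynamic string'' with a collapsing alphabet factor is required, and your proposal stands or falls on exhibiting one.
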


The algorithm works in two steps, during the first of which builds $RLBWT^+(\overleftarrow T)$ by inserting left-to-right $T$-characters in a \emph{dynamic} $RLBWT$ represented with the data structure of Theorem \ref{th:dynamic RL}---using the procedure sketched in the previous section. In the second step, the procedure scans $T$ once more left-to-right while searching (reversed) LZ77 phrases in $RLBWT^+(\overleftarrow T)$. At the same time, a dynamic suffix array sampling is created by storing, for each BWT equal-letter run, the two most external (i.e. leftmost and rightmost in the run) text positions seen up to the current position; the key property proved in~\cite{policriti2016computing} is that this sparse suffix array sampling is sufficient to locate LZ77 phrase boundaries and sources. LZ77 phrases are outputted in text order, therefore they can be directly streamed to output. The total size of the suffix array sampling never exceeds $2\runs$. 
From Theorem \ref{th:dynamic RL}, all operations (\emph{insert}, \emph{LF-mapping}, \emph{access}) are supported in $\bigO(\log \runs)$ time and the structure takes $\bigO(\runs)$ words of space. The claimed space/time bounds of the algorithm easily follow.

Note that, using the algorithm described in~\cite{policriti2016computing}, we can only perform the conversion $RLBWT^+(\overleftarrow T) \rightarrow LZ77(T)$. Our full procedure to achieve conversion $RLBWT(T) \rightarrow LZ77(T)$ consists of the following three steps:
\begin{enumerate}
	\item convert $RLBWT(T)$ to $RLBWT^+(T)$, i.e. we add support for \texttt{rank}/\texttt{select}/\texttt{access} queries on $RLBWT(T)$;
	\item compute $RLBWT^+(\overleftarrow T)$ using $RLBWT^+(T)$;
	\item run the algorithm described in~\cite{policriti2016computing} and compute $LZ77(T)$ using $RLBWT^+(\overleftarrow T)$.
\end{enumerate}

Let $RLBWT(T) = \langle  \len_i, c_i \rangle_{i=1,\dots, \runs}$  (see the previous section). Step 1 can be performed by just inserting characters $c_1^{\len_1}c_2^{\len_2}\dots c_{\runs}^{\len_{\runs}}$ (in this order) in the dynamic run-length encoded string data structure of Theorem \ref{th:dynamic RL}.
Step 2 is performed by  extracting characters $T[0], T[1], \dots, T[n-1]$ from $RLBWT^+(T)$ and inserting them (in this order) in a dynamic $RLBWT$ data structure with the BWT construction algorithm sketched in the  Section (\ref{basics}). Since this algorithm builds the $RLBWT$ of the \emph{reversed} text, the final result is $RLBWT^+(\overleftarrow T)$. 
We can state our first result: 

\begin{theorem}\label{th:rlbwt-lz77}
	Conversion $RLBWT(T) \rightarrow LZ77(T)$ can be performed in $\bigO(n\log r)$ time and $\bigO(\runs)$ words of working space. 
\end{theorem}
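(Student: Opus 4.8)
The plan is to prove Theorem~\ref{th:rlbwt-lz77} by a direct time–space accounting of the three-stage pipeline already laid out, since every stage only invokes operations whose individual costs we can read off from Theorem~\ref{th:dynamic RL} and from the $\bigO(\log \runs)$ bounds stated for LF, FL, and \texttt{extend}. Concretely, I would bound the running time and the peak working space of each stage separately and then combine them additively: if each of the three stages runs in $\bigO(n\log \runs)$ time while occupying $\bigO(\runs)$ words, the sum gives precisely the claimed $\bigO(n\log \runs)$ time and $\bigO(\runs)$ space.

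For Stage~1 the argument is immediate. We perform $n$ \texttt{insert} operations that append the characters $c_1^{\len_1}\cdots c_{\runs}^{\len_{\runs}}$ of $BWT(T)$ from left to right, so at every intermediate moment the stored string is a \emph{prefix} of $BWT(T)$. A prefix of a string has no more equal-letter runs than the string itself, hence the current run count never exceeds $\runs_T\le\runs$; by Theorem~\ref{th:dynamic RL} each \texttt{insert} then costs $\bigO(\log \runs)$ and the structure occupies $\bigO(\runs)$ words throughout, for a total of $\bigO(n\log \runs)$ time. Stage~3 requires no new analysis: it is exactly the procedure of~\cite{policriti2016computing}, which on input $RLBWT^+(\overleftarrow T)$ outputs $LZ77(T)$ in $\bigO(n\log \runs)$ time and $\bigO(\runs)$ words (the suffix-array sampling never exceeding $2\runs$ entries), and since its output is streamed to disk it adds nothing to the working space.

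Stage~2 is where I expect the real work to lie. Here we extract $T[0],\dots,T[n-1]$ from $RLBWT^+(T)$ — a fixed structure with $\runs_T\le\runs$ runs, so each character is recovered in $\bigO(\log \runs)$ time — and feed them one by one into the online reversed-BWT construction of Section~\ref{basics}, each character triggering a single backward-search \texttt{extend}. The delicate point is that this construction passes through the BWTs of the \emph{reversed prefixes} $\overleftarrow{T[0\dots i]}$, and the number of equal-letter runs of such a prefix is \emph{not} a priori bounded by $\runs=\max\{\runs_T,\runs_{\overleftarrow T}\}$: unlike Stage~1, these intermediate strings are not prefixes of the final BWT, and the run count of a BWT need not be monotone under extending the text. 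Since during Stage~2 we simultaneously hold $RLBWT^+(T)$ and the growing structure for $\overleftarrow T$, this matters for the space bound as well as for the per-operation time. The crux of the proof is therefore to show that the run count of every intermediate structure $RLBWT^+(\overleftarrow{T[0\dots i]})$ stays in $\bigO(\runs)$; this certifies at once the $\bigO(\log \runs)$ cost of each of the $n$ \texttt{extend} operations and the $\bigO(\runs)$ peak space. Granting this bound, the three stages sum to $\bigO(n\log \runs)$ time within $\bigO(\runs)$ words, which is the statement.
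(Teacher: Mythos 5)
Your pipeline and the accounting for Stages 1 and 3 coincide with the paper's own proof: Step 1 is $n$ \texttt{insert} operations into the structure of Theorem~\ref{th:dynamic RL} (the paper notes the string being built contains $\runs_T$ runs; your prefix argument just makes this observation explicit), and Step 3 is delegated to the algorithm of~\cite{policriti2016computing} exactly as the paper does. The problem is Stage 2, and there your proposal is not a proof: you correctly isolate as ``the crux'' the claim that every intermediate structure $RLBWT^+(\overleftarrow{T[0\dots i]})$ has $\bigO(\runs)$ runs, but you then write ``granting this bound'' and stop. Reducing the theorem to an unproven claim and granting that claim establishes nothing; as submitted, the argument is incomplete precisely at the step you yourself single out as the hard one.

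That said, the subtlety you flag is genuine, and here your instinct is sharper than the paper's treatment. The online construction left-extends the text ($\overleftarrow{T[0\dots i+1]} = T[i+1]\cdot\overleftarrow{T[0\dots i]}$); each extension amounts to one substitution plus one insertion in the current BWT, which bounds the change in run count per step by a constant but allows the count to drift over $n$ steps, and no monotonicity argument of the kind you used for Stage 1 applies, since the intermediate strings are BWTs of suffixes of $\overleftarrow{T}$ rather than prefixes of $BWT(\overleftarrow{T})$, and the number of BWT runs is in general not monotone under such extensions. The paper's own proof does not supply this missing piece either: it bounds the space of Step 2 by observing that $\runs$ is defined as $\max\{\runs_T,\runs_{\overleftarrow T}\}$ and that the two structures involved are $RLBWT^+(T)$ and $RLBWT^+(\overleftarrow T)$---i.e., it accounts only for the initial and final configurations and silently treats all intermediate run counts as bounded by $\bigO(\runs)$. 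So your proposal, while identifying a real gap that the paper glosses over, would need to either actually prove the intermediate bound (which does not follow from anything stated in the paper) or weaken/restate the claimed bounds in terms of the maximum run count over all intermediate texts; without one of these, it does not constitute a proof of the theorem as stated.
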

\begin{proof}
	We use the dynamic RLBWT structure of Theorem \ref{th:dynamic RL} to implement components $RLBWT^+(T)$ and $RLBWT^+(\overleftarrow T)$. Step 1 requires $n$ \texttt{insert} operations in  $RLBWT^+(T)$, and terminates therefore in $\bigO(n\log \runs)$ time. Since the string we are building contains $r_T$ runs, this step uses $\bigO(\runs)$ words of working space. Step 2 calls $n$ \texttt{extend} and \texttt{FL} queries on dynamic RLBWTs. \texttt{extend} requires a constant number of \texttt{rank} and \texttt{insert} operations~\cite{chan2007compressed}. 
	FL function requires just an \texttt{access} and a \texttt{rank} on the F column and a \texttt{select} on the L column. 
	From Theorem \ref{th:dynamic RL}, all these operations are supported in $\bigO(\log r)$ time, so also step 2 terminates in $\bigO(n\log \runs)$ time. Recall that $r$ is defined to be the maximum between the number of runs in $BWT(T)$ and $BWT(\overleftarrow T)$. Since in this step we are building $RLBWT^+(\overleftarrow T)$ using $RLBWT^+(T)$, the overall space is bounded by $\bigO(r)$ words. Finally, step 3 terminates in $\bigO(n\log \runs)$ time while using  $\bigO(r)$ words of space~\cite{policriti2016computing}. The claimed bounds for our algorithm to compute $RLBWT(T) \rightarrow LZ77(T)$ follow.
\end{proof}

\section{From LZ77 to RLBWT}\label{sec:lz77->rlbwt}

Our strategy to convert $LZ77(T)$ to $RLBWT(T)$ consists of the following steps:

\begin{enumerate}
	\item extract $T[0], T[1], \dots, T[n-1]$ from $LZ77(T)$ and build  $RLBWT^+(\overleftarrow T)$;
	\item convert $RLBWT^+(\overleftarrow T)$ to $RLBWT^+(T)$;
	\item extract equal-letter runs from $RLBWT^+(T)$ and stream $RLBWT(T)$ to the output.
\end{enumerate}

Step 2 is analogous to step 2 discussed in the previous section. Step 3 requires reading characters $RLBWT^+(T)[0]$, ..., $RLBWT^+(T)[n-1]$ (\texttt{access} queries on $RLBWT^+(T)$) and keeping in memory a character storing last run's head and a counter keeping track of last run's length. Whenever we open a new run, we stream last run's head and length to the output.

The problematic step is the first. As mentioned in the introduction, extracting a character from $LZ77(T)$ requires to follow a chain of character copies. In the worst case, the length $h$ of this chain---also called the parse height (see~\cite{kreft2013compressing} for a formal definition)---can be as large as $\sqrt n$. Our observation is that, since we are building $RLBWT^+(\overleftarrow T)$, we can use this component to  efficiently extract text from $LZ77(T)$: while decoding factor $\langle \pos_v, \len_v, c_v \rangle$, we convert $\pos_v$ to a position on the RLBWT and extract $\len_v$ characters from it. The main challenge in efficiently achieving this goal is to convert text positions to RLBWT positions (taking into account that the RLBWT is dynamic and therefore changes in size and content). 

\subsection{Dynamic functions}

Considering that $RLBWT^+(\overleftarrow T)$ is built incrementally, we need a data structure to encode a  function $\mathcal Z :\{\pi_1,...,\pi_z\} \rightarrow \{0,...,n-1\}$ mapping those text positions that are the source of some LZ77 phrase to their corresponding $RLBWT$ positions. Moreover, the data structure must be \emph{dynamic}, that is it must support the following three operations (see below the list for a description of how these operations will be used): 

\begin{itemize}
	\item \texttt{map}: $\mathcal Z(i)$. Compute the image of $i$
	\item \texttt{expand}: $\mathcal Z.expand(j)$. Set $\mathcal Z(i)$ to $\mathcal Z(i)+1$ for every $i$ such that $\mathcal Z(i)\geq j$
	\item \texttt{assign}: $\mathcal Z(i) \leftarrow j$. Call $\mathcal Z.expand(j)$ and set $\mathcal Z(i)$ to $j$
\end{itemize}

To keep the notation simple and light, we use the same symbol $\mathcal Z$ for the function as well as  for the data structure representing it.
We say that  $\mathcal Z(i)$ is \emph{defined} if, for some $j$, we executed an \texttt{assign} operation $\mathcal Z(i) \leftarrow j$ at some previous stage of the computation. For technical reasons that will be clear later, we restrict our attention to the case where we execute \texttt{assign} operations $\mathcal Z(i) \leftarrow j$ for increasing values of $i$, i.e. if $\mathcal Z(i_1) \leftarrow j_1, \dots, \mathcal Z(i_q) \leftarrow j_q$ is the sequence (in temporal order) of the calls to \texttt{assign}  on $\mathcal Z$, then $i_1 < \dots < i_q$.
This case will be sufficient in our case and, in particular, $i_1, \dots, i_q$ will be the sorted non-null phrases sources $\pi_1,\dots, \pi_z$.  Finally, we assume that $\mathcal Z(i)$ is always called  when  $\mathcal Z(i)$ has already been defined---again, this will be the case in our algorithm. 

Intuitively,  $\mathcal Z.expand(j)$ will be used when we insert $T[i]$ at position $j$ in the partial $RLBWT^+(\overleftarrow T)$ and $j$ is not associated with any phrase source (i.e. $i\neq \pi_v$ for all $v=1,\dots,z$). When we insert $T[i]$ at position $j$ in the partial $RLBWT^+(\overleftarrow T)$ and $i = \pi_v$ for some $v=1,\dots,z$ (possibly more than one), $\mathcal Z(i) \leftarrow j$ will be used. 


\medskip

The existence and associated query-costs of the data structure $\mathcal Z$ are proved in the following lemma.

\begin{lemma}
	Letting $z$ be the number of phrases in the LZ77 parsing of $T$, there exists a data structure  taking $\bigO(z)$ words of space and supporting \texttt{map}, \texttt{expand}, and \texttt{assign} operations on $\mathcal Z :\{\pi_1,...,\pi_z\} \rightarrow \{0,...,n-1\}$ in $\bigO(\log z)$ time	 
\end{lemma}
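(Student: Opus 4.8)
The plan is to build $\mathcal Z$ as a balanced binary search tree keyed on the domain indices $i$, where each node stores the image value $\mathcal Z(i)$, and to handle the global shift induced by \texttt{expand} lazily rather than by touching every affected node. The central difficulty is clearly the \texttt{expand} operation: naively it modifies $\mathcal Z(i)$ for every $i$ whose image is at least $j$, which could be $\Theta(z)$ nodes per call and thus $\Theta(nz)$ total work. So the whole design must be organized so that an \texttt{expand} touches only $\bigO(\log z)$ nodes while still reporting correct images under \texttt{map}.

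First I would exploit the structural restriction granted in the setup: \texttt{assign} operations $\mathcal Z(i)\leftarrow j$ arrive with strictly increasing $i$, and these $i$ are exactly the sorted sources $\pi_1,\dots,\pi_z$. This means the domain is revealed in sorted order, so I can maintain the BST so that an in-order traversal lists the currently-defined sources in increasing order of $i$; each \texttt{assign} appends a new rightmost key. The image values $\mathcal Z(i)$ are the interesting quantity, and the key observation I would try to formalize is that \texttt{expand} increments a contiguous \emph{suffix} of images when sorted by value — but since images need not be sorted in the same order as the keys, I cannot simply store them and shift a range of the key-tree. The clean way around this is to store, at each node, not the absolute image but a value together with subtree-aggregate information, and to represent each \texttt{expand}$(j)$ as an increment applied to all images in the value-range $[j,\infty)$.

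Concretely, I would maintain the images themselves in a structure ordered by value (or, equivalently, augment the key-tree with a companion order-statistics structure on images), and implement \texttt{expand}$(j)$ as a lazy ``add $1$ to every image $\geq j$'' operation: locate the split point for value $j$ in $\bigO(\log z)$ time and push a deferred $+1$ tag down the $\bigO(\log z)$ nodes on the root-to-split path, using the standard lazy-propagation technique for range-add on a balanced tree. A \texttt{map}$(i)$ query then descends to the node for key $i$, accumulating along the way the deferred increments that apply to it, and returns the stored base value plus the accumulated shift — again $\bigO(\log z)$ time. The one subtlety requiring care is consistency between the two orderings: because a single increment tag must be applied to exactly those entries whose current image is $\geq j$, I would either keep the images in a self-balancing tree ordered by image value (inserting a new image on each \texttt{assign}, which interleaves cleanly since \texttt{assign} first calls \texttt{expand}$(j)$ and then installs $\mathcal Z(i)=j$), or thread the lazy tags through a tree that supports splitting on value; either way the \texttt{expand} and \texttt{map} costs stay $\bigO(\log z)$ and the space is $\bigO(z)$ words since we store $\bigO(z)$ nodes each holding $\bigO(1)$ augmented fields.

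The main obstacle I expect is precisely reconciling the key-order in which \texttt{assign} reveals entries with the value-order in which \texttt{expand} must act, while keeping both the shift-accumulation for \texttt{map} and the range-increment for \texttt{expand} correct under lazy propagation; once the invariant ``the image of a node equals its stored base plus the sum of all undischarged $+1$ tags on ancestor nodes that cover it'' is stated and shown to be maintained by each of the three operations, the time and space bounds follow immediately from the $\bigO(\log z)$ height of the balanced tree holding $\bigO(z)$ entries. I would therefore spend most of the proof pinning down that invariant and verifying it is preserved by \texttt{assign}, \texttt{expand}, and \texttt{map}, leaving the complexity accounting as a short corollary.
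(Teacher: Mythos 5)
Your proposal is correct in substance but takes a genuinely different route from the paper. You keep the images as explicit numbers in a balanced BST ordered by image value and realize \texttt{expand} as a lazy suffix range-increment ($+1$ on all current values $\geq j$), so that the true image of an entry is its stored base plus the sum of undischarged tags on its ancestors; \texttt{map} then reaches the entry's node and accumulates those tags. One phrasing slip: in a value-ordered tree you cannot \emph{descend} to ``the node for key $i$''---you must keep an auxiliary pointer from each key to its node (an append-only vector indexed by the rank of $i$ among the sorted sources suffices, thanks to the increasing-$i$ restriction) and then \emph{climb} to the root via parent pointers; this is consistent with the ancestor-tag invariant you state, so it is an imprecision rather than a gap. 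The paper instead never materializes image values at all: it maps the domain to rank space by binary search on the sorted sources, routes the codomain through a dynamic gap-encoded bitvector $C$ whose 1-bits mark current image positions---\texttt{map} is a $C.select_1$, a plain \texttt{expand} is the insertion of a 0-bit, an \texttt{assign} inserts a 1-bit---and represents the residual part as a dynamic permutation: a red-black tree with subtree-size counters whose leaves' in-order ranks \emph{are} the images, so that inserting a leaf at position $j$ effects the $+1$ shift of all larger images positionally, for free. The trade-off: the paper's scheme needs no lazy tags, hence no tag/rotation interaction to verify (insertion into an ordered dynamic structure \emph{is} the shift), and it reuses an off-the-shelf component (the gap-encoded bitvector of~\cite{policriti2016computing}); your scheme is a single self-contained tree, but its correctness argument must additionally check that tags are pushed down along search and insertion paths and before every rebalancing rotation, which is exactly the invariant-preservation work you flag as the bulk of the proof. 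Both solutions give $\bigO(z)$ words and $\bigO(\log z)$ time per operation, since \texttt{expand} adds no nodes in either one.
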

\begin{proof}
	
	First of all notice that, since $LZ77(T)$ is our input, we know beforehand the  domain $\mathcal D = \{ \pi\ |\ \langle\pi, \len, c\rangle \in LZ77(T)\ \wedge \pi\neq \bot \}$ of $\mathcal Z$. We can therefore map the domain to rank space and restrict our attention to functions $\mathcal Z':\{0,...,d-1\} \rightarrow \{0,...,n-1\}$, with $d = |\mathcal D| \leq z$. To compute $\mathcal Z(i)$ we map $0\leq i < n$ to a rank $0\leq i' < d$ by binary-searching a precomputed array containing the sorted values of $\mathcal D$ and return $\mathcal Z'(i')$. Similarly, $\mathcal Z(i) \leftarrow j$ is implemented by executing $\mathcal Z'(i') \leftarrow j$ (with $i'$ defined as above), and $\mathcal Z.expand(j)$ simply as $\mathcal Z'.expand(j)$.
	
	We use a dynamic gap-encoded bitvector $C$ marking (by setting a bit) those positions $j$ such that $j=\mathcal Z(i)$ for some $i$. A dynamic gap-encoded bitvector with $b$ bits set can easily be implemented using a red-black tree such that it takes $\bigO(b)$ words of space and supports \texttt{insert}, \texttt{rank}, \texttt{select}, and \texttt{access} operations in $\bigO(\log b)$ time; see~\cite{policriti2016computing} for such a reduction. 
	Upon initialization of $\mathcal Z$, $C$ is empty. Let $k$ be the number of bits set in $C$ at some step of the computation. 
	We can furthermore restrict our attention to \emph{surjective} functions $\mathcal Z'':\{0,...,d-1\} \rightarrow \{0,...,k-1\}$ as follows. $\mathcal Z'(i')$ (\texttt{map}) returns $C.select_1(\mathcal Z''(i'))$. The \texttt{assign} operation $\mathcal Z'(i') \leftarrow j$ requires the \texttt{insert} operation $C.insert(1,j)$ followed by the execution of $\mathcal Z''(i') \leftarrow C.rank_1(j)$. Operation $\mathcal Z'.expand(j)$ is implemented with $C.insert(0,j)$.
	
	To conclude, since we restrict our attention to the case where---when calling $\mathcal Z(i) \leftarrow j$---argument $i$ is greater than all $i'$ such that $\mathcal Z(i')$ is defined, we will execute \texttt{assign} operations $\mathcal Z''(i') \leftarrow j''$ for increasing values of $i'=0,1,\dots,d-1$. In particular, at each \texttt{assign} $\mathcal Z''(i') \leftarrow j''$, $i'= k$ will be the current domain size. We therefore focus on a new operation, \texttt{append}, denoted as $\mathcal Z''.append(j'')$ and whose effect is $Z''(k) \leftarrow j''$. We are  left with the problem of finding a data structure for a  \emph{dynamic permutation} $\mathcal Z'':\{0,...,k-1\} \rightarrow \{0,...,k-1\}$ with support for \texttt{map} and \texttt{append} operations. Note that both domain and codomain size ($k$) are incremented by one after every \texttt{append} operation.
	
	\begin{example}
		Let $k=5$ and $\mathcal Z''$ be the permutation $\langle 3,1,0,4,2 \rangle$. After  $\mathcal Z''.append(2)$, $k$ increases to $6$ and $\mathcal Z''$ turns into the permutation $\langle 4,1,0,5,3,2\rangle$. Note that $\mathcal Z''.append(j'')$ has the following effect on the permutation: all numbers larger than or equal to $j''$ are incremented by one, and $j''$ is appended at the end of the permutation.
	\end{example}

	To implement the dynamic permutation $\mathcal Z''$, we use a red-black tree $\mathcal T$. We associate to each internal tree node $x$ a counter storing the number of leaves contained in the subtree rooted in $x$. Let $m$ be the size of the tree. The tree supports two operations: 
	
	\begin{itemize}
		\item $\mathcal T.insert(j)$. Insert a new leaf at position $j$, i.e. the new leaf will be the $j$-th leaf to be visited in the in-order traversal of the tree. This operation  can be implemented using subtree-size counters to guide the insertion. After the leaf has been inserted, we need to re-balance the tree (if necessary) and update at most $\bigO(\log m)$ subtree-size counters. The procedure returns (a pointer to) the tree leaf $x$ just inserted. Overall, $\mathcal T.insert(j)$ takes $\bigO(\log m)$ time
		\item $\mathcal T.locate(x)$. Take as input a leaf in the red-black tree and return the (0-based) rank of the leaf among all leaves in the in-order traversal of the tree. $\mathcal T.locate(x)$ requires climbing the tree from $x$ to the root and use subtree-size counters to retrieve the desired value, and therefore runs in $\bigO(\log m)$ time.  
	\end{itemize}

	At this point, the dynamic permutation $\mathcal Z''$ is implemented using the tree described above and a vector $N$ of red-black tree leaves supporting \texttt{append} operations (i.e. insert at the end of the vector). $N$ can be implemented with a simple vector of words with initial capacity 1. Every time we need to add an element beyond the capacity of $N$, we re-allocate $2|N|$ words for the array. $N$ supports therefore constant-time access and amortized constant-time append operations. Starting with empty $\mathcal T$ and $N$, we implement operations on  $\mathcal Z''$ as follows: 
	
	\begin{itemize}
		\item $\mathcal Z''.map(i)$ returns $\mathcal T.locate(N[i])$
		\item $\mathcal Z''.append(j)$ is implemented by calling $N.append(\mathcal T.insert(j))$
	\end{itemize}
	
	
	Taking into account all components used to implement our original dynamic function  $\mathcal Z$, we get the bounds of our lemma.
\end{proof}

\subsubsection{The algorithm}

The steps of our algorithm to compute $RLBWT^+(\overleftarrow T)$ from  $LZ77(T)$ are the following:

\begin{enumerate}
	\item  sort $\mathcal D = \{ \pi\ |\ \langle\pi, \len, c\rangle \in LZ77(T)\ \wedge \pi\neq \bot \}$;
	\item process $\langle \pos_v, \len_v, c_v \rangle_{v=1,...,z}$ from the first to last triple as follows. When processing $\langle \pos_v, \len_v, c_v \rangle$:
	\begin{enumerate}
		\item use our dynamic function $\mathcal Z$ to convert text position $\pos_v$ to RLBWT position $j'=\mathcal Z(\pos_v)$
		\item extract $\len_v$ characters from RLBWT starting from position $j'$ by using the LF function; at the same time, extend RLBWT with the extracted characters. 
		\item when inserting a character at position $j$ of the RLBWT, if $j$ corresponds to some text position $i\in \mathcal D$, then  update $\mathcal Z$ accordingly by setting $\mathcal Z(i)\leftarrow j$. If, instead, $j$ does not correspond to any text position in $\mathcal D$,  execute $\mathcal Z.expand(j)$.
	\end{enumerate}
	
\end{enumerate}

Our algorithm is outlined below as Algorithm 1. Follows a detailed description of the pseudocode and a result stating  its complexity.

\medskip

In Lines 1-5 we initialize all structures and variables. In order: we compute and sort set $\mathcal D$ of phrase sources, we initialize current text position $i$ ($i$ is the position of the character to be read), we initialize an empty RLBWT data structure (we will build $RLBWT^+(\overleftarrow T)$ online), and we create an empty dynamic function data structure $\mathcal Z$. In Line 6 we enter the main loop iterating over LZ77 factors. If the current phrase's source is not empty (i.e. if the phrase copies a previous portion of the text), we need to extract $\len_v$ characters from the RLBWT. First, in Line 8 we retrieve the RLBWT position $j'$ corresponding to text position $\pos_v$ with a \texttt{map} query on $\mathcal Z$. Note that, if $\pos_v\neq\NULL$, then $i>\pos_v$ and therefore $\mathcal Z(\pos_v)$ is defined (see next). We are ready to extract characters from RLBWT. For $\len_v$ times, we repeat the following procedure (Lines 10-19). We read the $l$-th character from the source of the $v$-th phrase (Line 10) and insert it in the RLBWT (Line 11). Importantly, the \texttt{extend} operation at Line  11 returns the RLBWT position $j$ at which the new character is inserted; RLBWT position $j$ correspond to text position $i$. We now have to check if $i$ is the source of some LZ77 phrase. If this is the case (Line 12), then we link text position $i$ to RLBWT position $j$ by calling a \texttt{assign} query on $\mathcal Z$ (Line 13). If, on the other hand, $i$ is not the source of any phrase, then we call a \texttt{expand} query on $\mathcal Z$ on the codomain element $j$. Note that, after the \texttt{extend} query at Line 11, RLBWT positions after the $j$-th are shifted by one. If $j'$ is one of such positions, then we increment it (Line 17). Finally, we increment text position $i$ (Line 19). At this point, we finished copying characters from the $v$-th phrase's source (or we did not do anything if the $v$-th phrase consists of only one character). We therefore extend the RLBWT with the $v$-th trailing character (Line 20), and (as done before) associate text position $i$ to RLBWT position $j$ if $i$ is the source of some phrase (Lines 21-24). We conclude the main loop by incrementing the current position $i$ on the text (Line 25). Once all characters have been extracted from LZ77, RLBWT is a run-length BWT structure on $\overleftarrow T$. At Line 26 we convert it to $RLBWT^+(T)$ (see previous section) and return it as a series of pairs $\langle  \len_v, c_v \rangle_{v=1,\dots, \runs}$.

\begin{figure}[h!]
	\begin{center}
		\includegraphics[trim=3cm 9cm 3cm 3.5cm, clip=true, width=1\textwidth]{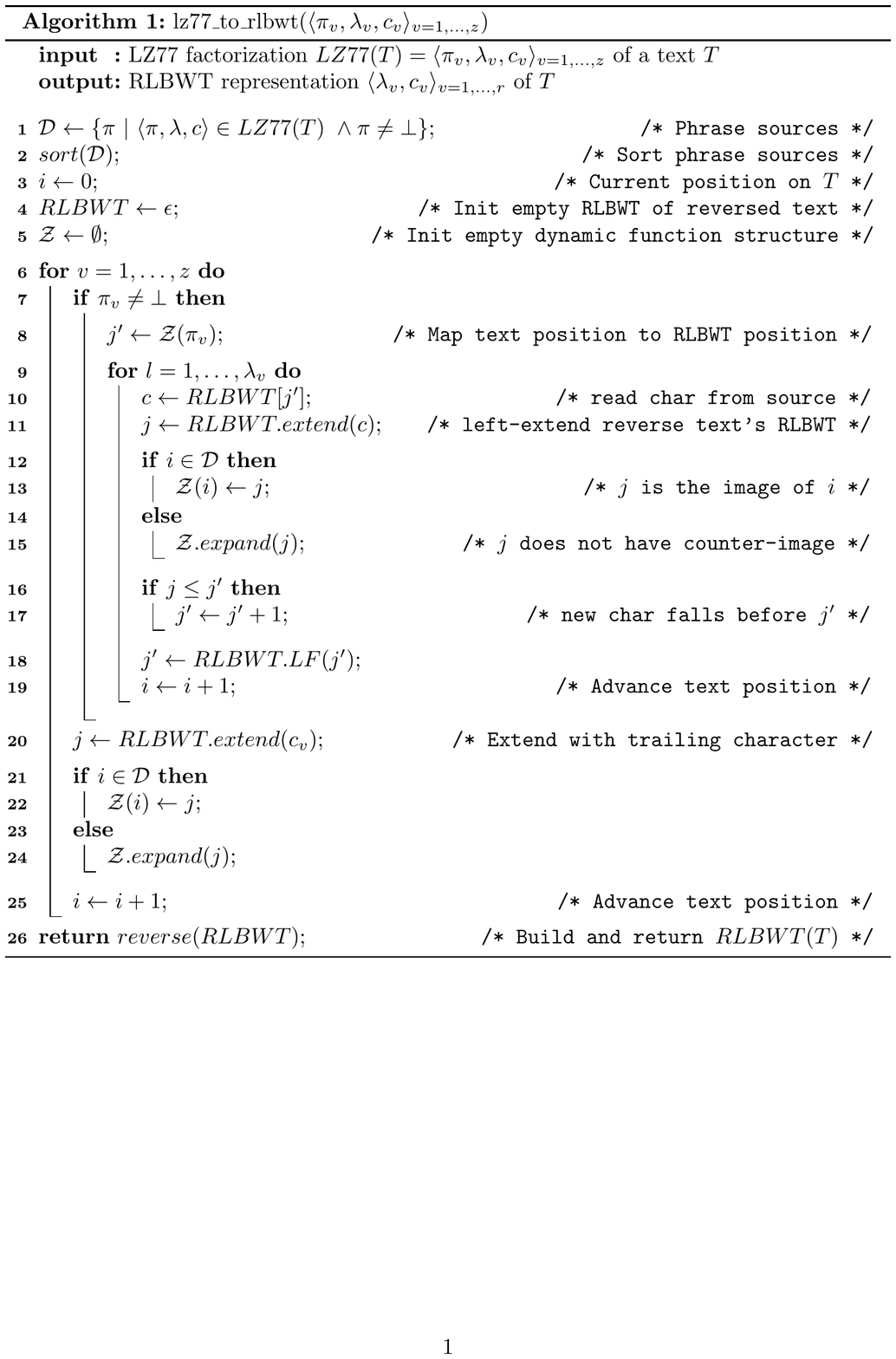}
	\end{center}
\end{figure}

\begin{theorem}\label{th:lz77-rlbwt}
	Algorithm 1 converts $LZ77(T) \rightarrow RLBWT(T)$ in $\bigO(n(\log r+\log z))$ time and $\bigO(\runs+z)$ words of working space
\end{theorem}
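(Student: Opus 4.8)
The plan is to establish correctness through a single invariant on the dynamic function $\mathcal Z$, and then to bound running time and space by summing the per-operation costs of the two supporting data structures (Theorem~\ref{th:dynamic RL} for the dynamic RLBWT and the preceding Lemma for $\mathcal Z$). The three phases of the conversion become essentially independent once correctness of the first is granted: phase two is the text-reversal routine already analysed in the proof of Theorem~\ref{th:rlbwt-lz77}, and phase three is a single left-to-right scan emitting maximal runs. Hence the crux is to prove that the main loop (Lines 6--25) indeed materialises $RLBWT^+(\overleftarrow T)$.

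First I would fix the invariant to be maintained at the top of each character insertion: \emph{for every source position $i\in\mathcal D$ that has already been inserted, $\mathcal Z(i)$ equals the current position of $T[i]$ inside the partial $RLBWT^+(\overleftarrow T)$.} The \texttt{extend} at Lines~11 and~20 inserts one character at some RLBWT position $j$ and shifts every position $\ge j$ up by one; this is exactly compensated by $\mathcal Z.expand(j)$, which re-adds one to every stored image $\ge j$, while $\mathcal Z(i)\leftarrow j$ (when $i$ is itself a source) first performs the same shift and then records the fresh image. A straightforward induction on the number of inserted characters then shows the invariant is preserved by every iteration, and that the running counter $j'$ (the image of $\pos_v$, corrected at Line~17) stays valid throughout the extraction of a single phrase.

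The main obstacle I expect is the well-definedness of the \texttt{map} query at Line~8 together with the self-referential extraction. Because an LZ77 source is an earlier occurrence, $\pos_v < i$ whenever $\pos_v\neq\NULL$; since $\pos_v\in\mathcal D$, the character $T[\pos_v]$ was inserted in a strictly earlier step, so $\mathcal Z(\pos_v)$ is defined and, by the invariant, returns its current RLBWT position. The delicate case is a phrase copying from a source that overlaps the region being written (e.g.\ a run). Here I would argue that, since each extracted character is inserted into the RLBWT \emph{before} the next one is read, advancing from $j'$ by the LF step always lands on an already materialised character; the invariant and the Line~17 correction guarantee that $j'$ tracks the moving image of the source cursor, so the classical parse-height blow-up is entirely avoided.

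For the running time, observe that the loop inserts exactly $\sum_{v=1}^{z}(\len_v+1)=n$ characters. Each insertion costs one \texttt{extend} and one LF/\texttt{access} on the RLBWT, i.e.\ $\bigO(\log\runs)$ by Theorem~\ref{th:dynamic RL}, plus one \texttt{assign} or \texttt{expand} on $\mathcal Z$, i.e.\ $\bigO(\log z)$ by the Lemma; the $z$ initial \texttt{map} queries and the sorting of $\mathcal D$ add $\bigO(z\log z)\subseteq\bigO(n\log z)$. Phases two and three contribute $\bigO(n\log\runs)$ each, so the total is $\bigO\big(n(\log\runs+\log z)\big)$. For space, the RLBWT of the reversed text (and, during phase two, also that of $T$) occupies $\bigO(\runs)$ words since at every moment it has at most $\runs$ runs, while $\mathcal Z$ together with the sorted array of $\mathcal D$ occupies $\bigO(z)$ words, and these coexist, giving $\bigO(\runs+z)$ words overall and completing the proof.
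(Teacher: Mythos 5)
Your proposal is correct and takes essentially the same route as the paper's own proof: the same operation accounting ($n$ \texttt{extend}/\texttt{LF}/\texttt{access} queries at $\bigO(\log \runs)$ each via Theorem~\ref{th:dynamic RL}, $\bigO(n)$ queries on $\mathcal Z$ at $\bigO(\log z)$ each via the Lemma, $\bigO(z\log z)\subseteq\bigO(n\log z)$ for sorting $\mathcal D$, plus $\bigO(n\log\runs)$ for the inversion and run-extraction phases), and the same space inventory of $\mathcal D$, $\mathcal Z$, and the two RLBWTs giving $\bigO(\runs+z)$ words. The only difference is presentational: you state the correctness invariant on $\mathcal Z$ explicitly inside the proof, whereas the paper establishes it informally in its line-by-line description of Algorithm 1 and confines the theorem's proof to the complexity analysis.
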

\begin{proof}
	Sorting set $\mathcal D$ takes $\bigO(z\log z) \subseteq  \bigO(n\log z)$ time. Overall, we perform $\bigO(z)$ \texttt{map}/\texttt{assign} and $n$ \texttt{expand} queries on $\mathcal Z$. All these operations take globally $\bigO(n\log z)$ time. We use the structure of Theorem \ref{th:dynamic RL} to implement $RLBWT^+(T)$ and $RLBWT^+(\overleftarrow T)$. We perform $n$ \texttt{access}, \texttt{extend}, and \texttt{LF} queries on $RLBWT^+(\overleftarrow T)$. This takes overall $\bigO(n\log \runs)$ time. Finally, inverting $RLBWT^+(\overleftarrow T)$ at Line 26 takes $\bigO(n\log \runs)$ time and $\bigO(r)$ words of space (see previous section). We keep in memory the following structures: $\mathcal D$, $\mathcal Z$, $RLBWT^+(\overleftarrow T)$, and $RLBWT^+(T)$. The bounds of our theorem easily follow.
\end{proof}

\section{Conclusions}

In this paper we presented space-efficient algorithms converting between two compressed file representations---the run-length Burrows-Wheeler transform (RLBWT) and the Lempel-Ziv 77 parsing (LZ77)---using a working space proportional to the input and the output. Both representations can be significantly (up to exponentially) smaller than the text; our solutions are therefore particularly useful in those cases in which the text does not fit in main memory but its compressed representation does. Another application of the results discussed in this paper is the optimal-space construction of compressed self-indexes based on these compression techniques (e.g.~\cite{belazzougui2015composite}) taking as input the RLBWT/LZ77 \emph{compressed} file. 

We point out two possible developments of our ideas. First of all, our algorithms rely heavily on dynamic data structures. On the experimental side, it has been recently shown~\cite{prezza2017framework} that algorithms based on compressed dynamic strings can be hundreds of times slower than others not making use of dynamism (despite offering very similar theoretical guarantees). This is due to factors ranging from cache misses to memory fragmentation; dynamic structures inherently incur into these problems as they need to perform a large number of memory allocations and de-allocations. A possible strategy for overcoming these difficulties is to build the RLBWT by merging two static RLBWTs while using a working space proportional to the output size. A second improvement over our results concerns theoretical running times. We note that our algorithms perform a number of steps proportional to the size $n$ of the text. Considering that the compressed file could be \emph{exponentially} smaller than the text, it is natural to ask whether it is possible to perform the same tasks in a time proportional to $r+z$. This seems to be a much more difficult goal due to the intrinsic differences among the two compressors---one is based on suffix sorting, while the other on replacement of repetitions with pointers.

\bibliographystyle{splncs}
\bibliography{rlbwt-lz77.bib}

\end{document}